\author[Alison Hsiang-Hsuan Liu and Nikhil S.~Mande]{Alison Hsiang-Hsuan Liu\affiliationmark{1}
  \and Nikhil S.~Mande\affiliationmark{2}\thanks{This work has been supported by Research England funding to enhance research culture, and the Royal Society International Exchanges grant ``Power of Knowledge in Explorable Uncertainty''.}
}
\title[Instance complexity of Boolean functions]{Instance complexity of Boolean functions}
\affiliation{
  Utrecht University, Utrecht, The Netherlands.\\
  University of Liverpool, Liverpool, UK.}
\keywords{Instance complexity, Boolean functions, query complexity}
\begin{document}
\publicationdata{vol. 28:2}{2026}{27}{10.46298/dmtcs.15917}{2025-06-23; 2025-06-23; 2026-04-13}{2026-05-06}

\newcommand{\zone}{\{0,1\}}
\newcommand{\T}{\mathcal{T}}
\newcommand{\DT}{\mathsf{DT}}
\newcommand{\C}{\mathsf{C}}
\newcommand{\Cmin}{\mathsf{C}_{\mathsf{min}}}
\newcommand{\IC}{\mathsf{InstC}}
\newcommand{\XOR}{\mathsf{XOR}}
\newcommand{\IND}{\mathsf{IND}}
\newcommand{\bin}{\mathsf{bin}}
\newcommand{\MAJ}{\mathsf{MAJ}}
\newcommand{\AND}{\mathsf{AND}}
\newcommand{\OR}{\mathsf{OR}}
\newcommand{\GT}{\mathsf{GT}}
\newcommand{\OMB}{\mathsf{OMB}}
\newcommand{\CONN}{\mathsf{CONN}}
\newcommand{\kCL}{\mathsf{CL}_{k}}
\newcommand{\cA}{\mathcal{A}}
\newcommand{\cB}{\mathcal{B}}

\newcommand{\cbra}[1]{\left\{#1\right\}}

\newtheorem{theorem}{Theorem}[section]
\newtheorem{corollary}[theorem]{Corollary}
\newtheorem{remark}[theorem]{Remark}
\newtheorem{lemma}[theorem]{Lemma}
\newtheorem{claim}[theorem]{Claim}
\newtheorem{question}[theorem]{Question}
\newtheorem{conjecture}[theorem]{Conjecture}
\newtheorem{observation}[theorem]{Observation}
\newtheorem{definition}[theorem]{Definition}

\maketitle
\begin{abstract}
    In the area of query complexity of Boolean functions, the most widely studied cost measure of an algorithm is the worst-case number of queries made by it on an input. Motivated by the most natural cost measure studied in online algorithms, the \emph{competitive ratio}, we consider a different cost measure for query algorithms for Boolean functions that captures the ratio of the cost of the algorithm and the cost of an optimal algorithm that knows the input in advance. The cost of an algorithm is its largest cost over all inputs. Grossman, Komargodski and Naor [ITCS'20] introduced this measure for Boolean functions, and dubbed it \emph{instance complexity}. Grossman et al.~showed, among other results, that monotone Boolean functions with instance complexity 1 are precisely those that depend on no variables or one variable.

    We complement the above-mentioned result of Grossman et al.~by completely characterizing the instance complexity of \emph{symmetric} Boolean functions. As a corollary we conclude that the only symmetric Boolean functions with instance complexity 1 are the Parity function and its complement. We also study the instance complexity of some graph properties like Connectivity and $k$-clique containment.

    In all the Boolean functions we study above, and those studied by Grossman et al., the instance complexity turns out to be the ratio of query complexity to \emph{minimum certificate complexity}. It is a natural question to ask if this is the correct bound for \emph{all} Boolean functions. We show a negative answer in a very strong sense, by analyzing the instance complexity of the Greater-Than and Odd-Max-Bit functions. We show that the above-mentioned ratio is linear in the input size for both of these functions, while we exhibit algorithms for which the instance complexity is a constant.
\end{abstract}

\section{Introduction}
In the typical setting of online algorithms, an algorithm designer's task is to design an efficient algorithm that is geared towards receiving inputs in an online fashion. More specifically, the input is revealed to an (online) algorithm piece by piece. On each revelation of a piece of the input, the algorithm needs to make \emph{irrevocable} decisions. A natural cost measure of an online algorithm is the \emph{competitive ratio}, which is defined as the biggest ratio of the algorithm's cost to the optimal offline algorithm's cost on the same input, where the optimal offline algorithm knows the whole input. 

Worst-case analysis is a setting studied in various different models, for instance the query complexity model, which is relevant to our discussion. Unlike the worst-case setting, competitive analysis does not only focus on measuring the performance of algorithms on single ``hard" inputs but is more representative of the performance of algorithms on all inputs as a whole. Moreover, it reveals how uncertainty affects the quality of decisions. The measure of competitive ratio has gained interest of late in the context of \emph{explorable uncertainty}. 
In this model, instead of completely unknown inputs, an algorithm receives an uncertain input with the promise that every numerical value sits in an interval, where the realization of the value can be learned by \emph{exploration}. 
The concept of explorable uncertainty has raised a lot of attention and has been studied on different problems, such as Pandora's box problem~\cite{DBLP:conf/soda/DingFHTX23}, sorting~\cite{DBLP:journals/tcs/HalldorssonL21}, finding the median~\cite{DBLP:conf/stoc/FederMPOW00}, identifying a set with the minimum-weight among a given collection of feasible sets~\cite{DBLP:journals/tcs/Erlebach0K16}, finding shortest paths~\cite{DBLP:journals/jal/FederMOOP07}, computing minimum spanning trees~\cite{DBLP:conf/stacs/HoffmannEKMR08}, etc. In most of these problems, the cost of an algorithm on an input is naturally considered to be the competitive ratio, which is the ratio of the number of explorations made (i.e., the number of \emph{queries} made to the input) to the number of explorations made by the best offline algorithm that knows the input in advance.

We consider query algorithms for Boolean functions, which are functions mapping an $n$-bit input to a single bit. A query algorithm for a Boolean function is represented by a \emph{decision tree}, which is a rooted binary tree with internal nodes labeled by variables, edges labeled by values in $\zone$, and leaves labeled by values in $\zone$. The decision tree evaluates an input in the natural way beginning at the root, and traversing a path until a leaf, at which point it outputs the value at the leaf. In the usual query complexity setting, the cost of this decision tree is its depth. The query complexity of a Boolean function $f$, also known as decision tree complexity of $f$, is the number of queries an optimal query algorithm makes on a worst-case input, i.e., the minimum depth of a decision tree computing $f$. 

It is natural to consider the cost measure described in the second paragraph for query algorithms for Boolean functions rather than the more general class of functions studied in the online algorithms setting. In the setting of explorable uncertainty, this corresponds to the input being unknown in the beginning, and each bit has a ``uncertainty range'' of $\zone$ rather than an ``uncertainty interval'' like in optimization problems as from the first two paragraphs. To this end, \cite{GKN20} only recently initiated the study of \emph{instance complexity} of Boolean functions, where instance complexity of an algorithm is the maximum over all inputs of the ratio of the number of queries made on the input to the number of queries made by an optimal algorithm that knows the whole input. We define the instance complexity of a Boolean function to be the minimum instance complexity of an algorithm that solves it. Intuitively, studying the instance complexity of Boolean functions overcomes some drawbacks that the usual query complexity model has. For example, the query complexity of a function could be very large owing to the hardness of one single input, but it may be the case that even algorithms with prior knowledge about this input may require lots of queries to certify the function's evaluation on it. Nevertheless, it does turn out to be single ``hard" inputs that contribute to the large instance complexity of the AND and OR functions, just as in the usual query complexity setting, but we show that this is not always the case.

\subsection{Our contributions}
We continue the study of instance complexity of Boolean functions initiated by Grossman et al. Among other results, they characterized monotone Boolean functions which are strictly instance optimizable, i.e., monotone Boolean functions which have instance complexity equal to 1. We complement this result by completely characterizing the instance complexity of \emph{symmetric} Boolean functions in terms of their univariate predicates. We refer the reader to Section~\ref{sec:prelims} for formal definitions.

For a symmetric function $f : \zone^n \to \zone$, let the integers $0 \leq \ell_0(f) \leq \ell_1(f) \leq n$ denote the end points of the largest interval of Hamming weights in which $f$ is a constant.
\begin{theorem}\label{thm:mainsymintro}
    Let $f : \zone^n \to \zone$ be a symmetric Boolean function. Then,
    \[
    \IC(f) = \frac{n}{\ell_0(f) + n - \ell_1(f)}.
    \]
\end{theorem}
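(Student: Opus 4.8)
The plan is to pin down $\IC(f)$ by proving matching upper and lower bounds, after first computing the minimum certificate complexity of a symmetric function. \emph{Step 1 (certificate complexity).} For an input $x$ of Hamming weight $w$, a certificate that fixes $a$ coordinates to $1$ and $b$ coordinates to $0$ (so necessarily $a\le w$ and $b\le n-w$) leaves the Hamming weight free to range over $[a,n-b]$, and hence is valid exactly when the univariate predicate of $f$ is constant on $[a,n-b]$. Minimizing $a+b$ subject to this constraint yields $\C(f,x)=n-L(w)$, where $L(w)$ is the length of the longest interval of weights containing $w$ on which $f$ is constant. Consequently $\Cmin(f)=n-\max_w L(w)=n-(\ell_1(f)-\ell_0(f))=\ell_0(f)+n-\ell_1(f)$, and this minimum is attained at every input whose weight lies in $[\ell_0(f),\ell_1(f)]$, in particular at weight $\ell_0(f)$ and at weight $\ell_1(f)$.

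\emph{Step 2 (upper bound).} The decision tree that queries all $n$ variables and then outputs $f$ costs exactly $n$ on every input, so its instance complexity equals $n/\min_x\C(f,x)=n/\Cmin(f)=n/(\ell_0(f)+n-\ell_1(f))$. Hence $\IC(f)\le n/(\ell_0(f)+n-\ell_1(f))$ with no further work.

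\emph{Step 3 (lower bound).} This is the core of the argument. We may assume $f$ is non-constant, so that $\ell_0(f)>0$ or $\ell_1(f)<n$. Replacing $f$ by $x\mapsto f(\bar x)$ does not change the instance complexity and swaps the roles of $\ell_0(f)$ and $n-\ell_1(f)$ (leaving $\ell_0+n-\ell_1$ unchanged), so we may assume $\ell:=\ell_1(f)<n$; then the predicate of $f$ changes value between weights $\ell$ and $\ell+1$. Fix an arbitrary decision tree $\cA$ computing $f$ that never repeats a query, and run it against the adversary that answers $1$ to the first $\ell$ queries and $0$ to every subsequent query. After $q$ queries of which $p$ were answered $1$, the set of Hamming weights consistent with the answers is exactly the interval $[p,n-q+p]$; the adversary's policy guarantees that for every $q<n$ this interval contains both $\ell$ and $\ell+1$, hence is not monochromatic for the predicate, so $\cA$ cannot safely halt and is forced to make all $n$ queries. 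The input $x$ finally revealed has exactly $\ell$ ones, so by Step 1 $\C(f,x)=\Cmin(f)=\ell_0(f)+n-\ell_1(f)$. Thus $\cA$ incurs cost $n$ on an input of certificate complexity $\ell_0(f)+n-\ell_1(f)$, so its instance complexity is at least $n/(\ell_0(f)+n-\ell_1(f))$; as $\cA$ was arbitrary, this is the matching lower bound, and combined with Step 2 the theorem follows.

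\emph{Main obstacle.} The delicate point is the design of the adversary in Step 3. A naive adversary that merely forces $n$ queries (for instance, always answering $0$ for $\OR_n$) ends on an input of large certificate complexity and yields only the trivial ratio $1$. The correct idea is to steer the computation so that the fully revealed input has weight exactly $\ell_1(f)$ (or $\ell_0(f)$ after undoing the symmetry reduction), i.e.\ an input of minimum certificate complexity, while simultaneously keeping the consistent-weight interval straddling the value change at $\ell_1(f)$ for as long as possible so that the tree is provably unable to stop early. Verifying that these two demands are compatible — which reduces to the elementary bookkeeping with the interval $[p,n-q+p]$ above — is the only part that needs care.
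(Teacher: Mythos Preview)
Your proof is correct and follows essentially the same approach as the paper: compute $\Cmin(f)=\ell_0(f)+n-\ell_1(f)$, get the upper bound from the query-everything tree, and get the lower bound via an adversary that forces all $n$ queries while landing on an input of minimum certificate complexity. The only cosmetic difference is that the paper assumes without loss of generality $\ell_0(f)>0$ and uses the adversary that answers $0$ to the first $n-\ell_0(f)$ queries and $1$ thereafter (landing on weight $\ell_0(f)$), whereas you assume $\ell_1(f)<n$ and answer $1$ first then $0$ (landing on weight $\ell_1(f)$); these are dual via the input-complementation symmetry you invoke.
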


In particular it follows that the only symmetric Boolean functions that are strictly instance optimizable are the Parity function and its complement. In the process we show that the instance complexity of a symmetric Boolean function $f$ is the ratio between its query complexity $\DT(f)$ (which is the decision tree complexity of $f$, equaling the number of input variables for symmetric $f$) and its \emph{minimum certificate complexity} $\Cmin(f)$, which is the smallest number of variables one needs to fix in order to fix the function value to a constant. In other words, $\Cmin(f)$ equals the minimum co-dimension of an affine subcube on which $f$ is a constant.

More generally, one can observe that $\DT(f)/\Cmin(f)$ is an upper bound on the instance complexity of any Boolean function $f$. Along with showing that this bound is attained for all symmetric functions, we also show that this bound is attained for some \emph{graph properties} like Connectivity and $k$-clique containment. 
Let $\CONN$ and $\kCL$ denote the Connectivity and $k$-Clique problems, respectively (See Definitions~\ref{defn:conn} and~\ref{defn:kcl}).
\begin{theorem}\label{thm:graphintro}
    Let $n$ be a sufficiently large positive integer and $k$ be a positive integer satisfying $k^3 \leq n^2/4$. Then,
    \[
    \IC(\CONN) = \frac{\DT(\CONN)}{\Cmin(\CONN)} = \frac{\binom{n}{2}}{n-1}, \qquad \IC(\kCL) = \frac{\DT(\kCL)}{\Cmin(\kCL)} = \frac{\binom{n}{2}}{\binom{k}{2}}.
    \]
\end{theorem}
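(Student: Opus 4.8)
The plan is to prove each claimed equality by separately establishing the matching upper and lower bounds, and then reading off the value of $\DT$ for free. For the upper bound I would invoke the general inequality $\IC(f)\le \DT(f)/\Cmin(f)$ noted above: an optimal depth-$\DT(f)$ decision tree makes at most $\DT(f)$ queries on every input, while any offline algorithm needs at least $\C_x(f)\ge\Cmin(f)$ queries on input $x$. Since trivially $\DT(\CONN)\le\binom n2$ and $\DT(\kCL)\le\binom n2$ (there are only $\binom n2$ variables), it then suffices to (i) identify $\Cmin(\CONN)$ and $\Cmin(\kCL)$, and (ii) prove the lower bounds $\IC(\CONN)\ge \binom n2/(n-1)$ and $\IC(\kCL)\ge \binom n2/\binom k2$; together these pin down everything, including $\DT(\CONN)=\DT(\kCL)=\binom n2$.

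Step (i) is combinatorial. For $\CONN$: a $1$-certificate is a connected spanning subgraph, hence has $\ge n-1$ edges, while a $0$-certificate is a set of non-edges containing a full cut, hence has size $\ge \min_{1\le s\le n-1}s(n-s)=n-1$; a star and an isolated vertex realise these, so $\Cmin(\CONN)=n-1$, and moreover \emph{every connected input} has certificate complexity exactly $n-1$ (fix the edges of any spanning tree). For $\kCL$: a $1$-certificate is a $k$-clique, of size $\binom k2$, while a $0$-certificate is a set of non-edges meeting every $k$-subset of vertices, i.e.\ the complement of a $K_k$-free graph, so by Turán's theorem its minimum size is $\binom n2-\mathrm{ex}(n,K_k)=\binom n2-e(T(n,k-1))$. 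The hypothesis $k=O(n^{2/3})$ is exactly what makes $\binom n2-\mathrm{ex}(n,K_k)=\Theta(n^2/k)\ge\binom k2$, so the cheapest certificate is a clique, $\Cmin(\kCL)=\binom k2$, and every input \emph{containing a $k$-clique} has certificate complexity exactly $\binom k2$.

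For step (ii) I would run an adversary against an arbitrary decision tree $T$, with two goals: (a) keep the value of $f$ undetermined through the first $\binom n2-1$ queries, so that $T$ is forced to query all $\binom n2$ edges, and (b) arrange that the revealed input is connected (for $\CONN$), resp.\ contains a $k$-clique (for $\kCL$), so its certificate complexity equals $\Cmin$ and the ratio is exactly $\DT/\Cmin$. For $\CONN$ this is clean: answer each queried edge ``absent'' unless that would disconnect the graph $\overline N$ of not-yet-``absent'' edges, in which case answer ``present''. This keeps $\overline N$ connected throughout (so $f$ is never forced to $0$), and the ``present'' edges are exactly the bridges of $\overline N$ at their query times, which always form a forest; hence $f$ is forced to $1$ only when the ``present'' edges form a \emph{spanning tree}, and since bridges are preserved under deletion of other edges, this requires $\overline N$ itself to be a spanning tree with all $\binom n2-(n-1)$ remaining edges answered ``absent'' — that is, only after all $\binom n2$ queries, with the revealed input a spanning tree (connected, certificate complexity $n-1$).

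The main obstacle is the analogous adversary for $\kCL$. The invariant to maintain is: ``the ``present'' edges contain no $K_k$, but the ``present'' edges together with the unqueried edges do contain a $K_k$''; if this survives $\binom n2-1$ queries then the last queried edge completes a $K_k$ among the ``present'' edges, and answering it ``present'' yields a $k$-clique-containing input on which $T$ made all $\binom n2$ queries. The difficulty is that naive greedy versions of this adversary can be cornered — $T$ can drive the ``present'' graph to one edge short of a $K_k$ while simultaneously forcing the not-``absent'' graph to be essentially $K_k$-free, with no safe reply, and this can occur well before $\binom n2$ queries. I would therefore maintain a stronger structural invariant on the two partial graphs, for instance bounding the chromatic number (a $(k-1)$-partition structure) of the ``present'' edges while keeping the not-``absent'' graph robustly $K_k$-rich (no small vertex set kills all its $k$-cliques), and show via Turán-type counting, valid precisely when $k=O(n^{2/3})$, that the adversary can always extend. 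Proving that this refined adversary is never stuck before the final query is the technical heart; once it is in place, combining with the upper bound closes both identities and hands back $\DT(\CONN)=\DT(\kCL)=\binom n2$ as a byproduct.
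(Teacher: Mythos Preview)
Your overall structure, the $\Cmin$ computations, and the adversary argument for $\CONN$ are all correct (your bound on $0$-certificates of $\CONN$ is even sharper than the paper's). The divergence from the paper, and the only real gap, is your step (ii) for $\kCL$.

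You are attempting to build a bespoke adversary that simultaneously achieves (a) evasiveness and (b) a $k$-clique in the final input, and you correctly identify this as hard. But (b) is free once you have (a): if the function is still undetermined after $\binom{n}{2}-1$ queries, then both settings of the last edge are consistent with the queries so far, and the ``present'' setting necessarily yields a $1$-input, i.e.\ a graph containing a $k$-clique, reached at depth $\binom{n}{2}$ with certificate complexity exactly $\binom{k}{2}$. So the entire problem reduces to (a), namely evasiveness of $\kCL$. The paper does not reprove this; it simply cites Bollob\'as's theorem that $\kCL$ is evasive (for \emph{all} $k$) and then applies exactly this last-edge observation. Your proposed Tur\'an-type adversary is therefore unnecessary, and the place where you invoke $k=O(n^{2/3})$ for the adversary is misplaced: evasiveness holds unconditionally in $k$, and the hypothesis $k=O(n^{2/3})$ is used only in step (i), to guarantee that $1$-certificates are cheaper than $0$-certificates so that $\Cmin(\kCL)=\binom{k}{2}$---which you already established correctly.

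In short: replace your $\kCL$ adversary paragraph by a citation for evasiveness plus the one-line observation about the $1$-completion of the last query, and your proof is complete and matches the paper's.
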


In view of this one may expect that the instance complexity of \emph{all} Boolean functions $f$ equals the ratio $\DT(f)/\Cmin(f)$. We show that this is false in a very strong way using two examples: the Greater-Than function on $2n$ input variables, and the Odd-Max-Bit function on $n$ variables, denoted by $\GT_n$ and $\OMB_n$, respectively (see Definitions~\ref{defn:gt} and~\ref{defn:omb}). Both of these functions have $\DT(f)/\Cmin(f) = \Theta(n)$ but instance complexity $O(1)$.
\begin{theorem}\label{thm:gtombresultsintro}
    For all positive integers $n$, we have
    \begin{align*}
        \DT(\GT_n) & = 2n, \qquad \Cmin(\GT_n) = 2, \qquad \IC(\GT_n) \leq 2\\
        \DT(\OMB_n) & = n, \qquad \Cmin(\OMB_n) = 1, \qquad \IC(\OMB_n) \leq 2.
    \end{align*}
\end{theorem}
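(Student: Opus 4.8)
The plan is to establish the six claims in three groups.

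\emph{Query complexities and minimum certificate complexities.} The bounds $\DT(\GT_n)\le 2n$ and $\DT(\OMB_n)\le n$ are trivial, so I would get the matching lower bounds from the standard fact $\DT(f)\ge\deg(f)$, the degree of the unique multilinear real polynomial representing $f$. Each function is a sum of indicators of pairwise-disjoint events:
\[
\GT_n(x,y)=\sum_i x_i(1-y_i)\prod_{j}\bigl(1-x_j-y_j+2x_jy_j\bigr),\qquad
\OMB_n(x)=\sum_{i\ \mathrm{odd}}x_i\prod_{j}(1-x_j),
\]
where in the first the inner product is over positions $j$ more significant than $i$ (so the $i$-th summand has degree $2+2\cdot\#\{j\text{ more significant than }i\}$), and in the second it is over the positions $j$ that dominate $i$ in the ordering defining the max bit. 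In $\GT_n$ only the summand for the least significant position has degree $2n$, and there the coefficient of $\prod_j x_jy_j$ is $-2^{n-1}\ne 0$; in $\OMB_n$ the summand for the extreme position has degree $n$, lies among the odd-indexed summands precisely because $n$ is odd, and contributes $\pm1\ne 0$ to $\prod_j x_j$. Hence $\deg(\GT_n)=2n$ and $\deg(\OMB_n)=n$, giving the claimed values. For the minimum certificate complexities: fixing the most significant bit of $x$ to $1$ and that of $y$ to $0$ makes $\GT_n\equiv 1$ on a codimension-$2$ subcube, and (using $n$ odd) fixing the extreme input bit to $1$ makes $\OMB_n\equiv 1$ on a codimension-$1$ subcube, so $\Cmin(\GT_n)\le 2$ and $\Cmin(\OMB_n)\le 1$; the reverse inequalities are immediate, using for $\GT_n$ a short case check that fixing a single bit never makes it constant.

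\emph{The algorithms.} For $\GT_n$ I would use the top-down comparison algorithm: query the two most significant bits, then the next two, and so on, halting and answering at the first position where $x$ and $y$ disagree, and answering $x\le y$ if none is found; this makes $2(p+1)$ queries when $(x,y)$ first disagree at a position with $p$ more significant positions above it, and $2n$ queries when $x=y$. For $\OMB_n$ I would scan the bits towards the end that defines the max bit, halt at the first $1$, and output the parity of its position (and output $0$ on $0^n$); this makes $q$ queries when the max bit is the $q$-th bit of the scan, and $n$ queries on $0^n$.

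\emph{Lower bounds on the input-aware cost (the crux).} The cost of an optimal algorithm that knows the input is the certificate complexity $\C(x)$ of that input, so it remains to bound $\C$ from below. For $\GT_n$: any subcube on which $\GT_n$ is constant and which contains $(x,y)$ must fix at least one of the two bits in each of the top $p+1$ positions --- otherwise one could flip the two free bits in such a position to a disagreement of the opposite orientation while leaving the more significant positions intact, which flips the output --- so $\C(x,y)\ge p+1$ when $x\ne y$, and $\C(x,y)=n$ when $x=y$; the algorithm's ratio is then at most $\tfrac{2(p+1)}{p+1}=2$ in every case, so $\IC(\GT_n)\le 2$. For $\OMB_n$ the same mechanism forces every position of the ``wrong'' parity up to the max bit to be fixed, and a short further argument --- any such subcube also contains a point whose max bit is the position just past $q$ (or the all-zero point, at the boundary), which has the wrong output --- forces one additional fixed bit, yielding $\C(x)\ge\tfrac{q+1}{2}$ or $\C(x)\ge\tfrac{q}{2}+1$ according to the parity of $q$, while $\C(0^n)$ equals the number of ``odd-parity'' positions, $\tfrac{n+1}{2}$. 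Dividing $q$ (resp.\ $n$) by these bounds gives a ratio of at most $\tfrac{2n}{n+1}<2$ in every case, so $\IC(\OMB_n)<2$. I expect this last point to be the main obstacle: checking that the certificate complexity never drops below half the scanning algorithm's cost, which is exactly what the ``$+1$'' in the certificate bounds secures and what makes the $\OMB_n$ inequality strict.
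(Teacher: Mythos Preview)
Your proposal is correct and, for the $\Cmin$ computations and the instance-complexity analyses of both functions, essentially identical to the paper's: the same top-down scanning algorithms, the same certificate lower bounds (one bit per relevant pair/position plus the crucial ``$+1$'' that makes the $\OMB_n$ bound strict), and the same case split. The one genuine difference is in proving $\DT(\GT_n)=2n$. The paper uses an adversary argument---answer each first-queried bit of a pair so as to leave the pair potentially unequal, and each second-queried bit so as to make the pair equal, forcing all $2n$ queries. You instead compute $\deg(\GT_n)=2n$ from the explicit sum-of-disjoint-events polynomial and invoke $\DT\ge\deg$. Both are valid; your route is pleasingly uniform (you use the same degree lemma for both $\GT_n$ and $\OMB_n$, whereas the paper mixes an adversary argument for $\GT_n$ with the degree argument for $\OMB_n$), while the paper's adversary for $\GT_n$ is marginally more self-contained since it avoids the polynomial machinery. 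Your identification of the top monomial coefficient $-2^{n-1}$ in $\GT_n$ is correct, since only the least-significant summand reaches degree $2n$.
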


While none of our results are deep or technically involved, our main goal is to bring to light the natural and interesting complexity measure of \emph{instance complexity} of Boolean functions.
Some interesting open questions that remain are to characterize the instance complexity of monotone or linear threshold functions in terms of some combinatorial parameter, just as we were able to do for symmetric functions.

\section{Preliminaries}\label{sec:prelims}
For a positive integer $n$, we use the notation $[n]$ to denote the set $\cbra{1, 2, \dots, n}$. For a string $x \in \zone^n$ and a set $S \subseteq [n]$, we denote by $x_S$ the string in $\zone^S$ that is the restriction of $x$ to the coordinates indexed by $S$. 
For a string $x \in \zone^n$, let $|x|$ denote the Hamming weight of $x$, that is, the number of 1s in $x$.
Let $\XOR_n : \zone^n \to \zone$ denote the Parity function on $n$ input bits, that outputs 1 iff the number of 1's in the input is odd. Let $\MAJ_n : \zone^n \to \zone$ denote the Majority function that outputs 1 iff the number of 1's is at least the number of 0's in the input.
Define the Indexing function as follows.
\begin{definition}[Indexing Function]
For a positive integer $m$, define the Indexing function, denoted $\IND_m : \zone^{m + 2^m} \to \zone$, by
\[
\IND_m(x, y) = y_{\bin(x)},
\]
where $\bin(x)$ denotes the integer in $[2^m]$ represented by the binary expansion $x$.
\end{definition}
In the above definition, we refer to $\cbra{x_i : i \in [m]}$ as the \emph{addressing variables}, and $\cbra{y_j : j \in [2^m]}$ as the \emph{target variables}.

A deterministic decision tree is a rooted binary tree. Internal nodes are labeled by variables $x_i$, and leaves are labeled by values in $\zone$. Given an input $x \in \zone^n$, the tree's evaluation on the input proceeds in the natural way: traverse the relevant edge depending on the value of the variable of the node until reaching a leaf, at which point the value at the leaf is output.
A decision tree $\T$ is said to compute a Boolean function $f : \zone^n \to \zone$ if  the value output by $\T$ on input $x$ equals $f(x)$ for all $x \in \zone^n$. Denote the number of queries made by $\T$ on input $x$ by $\T(x)$. The decision tree complexity of $\T$ is the worst-case number of queries made, i.e., its depth.

The decision tree complexity (also called \emph{deterministic query complexity}) of $f$, denoted $\DT(f)$, is defined as follows.
\[
\DT(f) := \min_{\T:\T~\textnormal{is a DT computing}~f}
\textnormal{depth}(\T).
\]
\emph{Certificate complexity} captures non-deterministic query complexity. A certificate for an input $x \in \zone^n$ to a function $f : \zone^n \to \zone$ is a set $S \subseteq [n]$ such that $f(y) = f(x)$ for all $y \in \zone^n$ with $y_S = x_S$. The certificate complexity of $f$ at input $x$, denoted $\C(f, x)$, is the minimum size of such a set $S$. The certificate complexity of $f$, denoted $\C(f)$, is defined as follows.
\[
\C(f) = \max_{x \in \zone^n}\C(f, x).
\]
We define another complexity measure that has been studied in the past: \emph{minimum certificate complexity}. This is the minimum co-dimension of an affine subcube on which the underlying function is a constant.
\begin{definition}
    For a Boolean function $f : \zone^n \to \zone$, define the \emph{minimum certificate complexity} of $f$, denoted $\Cmin(f)$, to be
    \[
    \Cmin(f) := \min_{x \in \zone^n} \C(f, x).
    \]
\end{definition}
The interested reader may refer to the survey~\cite{BW02} for an introduction to query complexity and related measures of Boolean functions.

\cite{GKN20} introduced the measure of \emph{instance complexity} of a Boolean function. Although they do not frame it as below, we find the form below convenient as it cleanly captures a complexity measure.

\begin{definition}\label{defn:instancecomp}
    For a Boolean function $f : \zone^n \to \zone$, an input $x \in \zone$ and a decision tree $\T$ that computes $f$, define the \emph{instance complexity of $f$ at input $x$ w.r.t.~$\T$}, which we denote by $\IC(f, x, \T)$, to be
    \[
    \IC(f, x, \T) := \frac{\T(x)}{\C(f, x)}.
    \]
    Define the \emph{instance complexity of $f$ w.r.t.~$T$} to be
    \[
    \IC(f, \T) := \max_{x \in \zone^n}\IC(f, x, \T).
    \]
    Finally, define the \emph{instance complexity of $f$}, which we denote by $\IC(f)$, to be
    \[
    \IC(f) = \min_{\T : \T~\mathrm{computes}~f}\IC(f, \T).
    \]
\end{definition}
In other words, the instance complexity of a function $f$ is small if there exists a decision tree solving it such that for \emph{all} inputs, the cost of the decision tree is not much larger that the cost of an optimal decision tree on that input (i.e., the certificate complexity of $f$ at that input).
Functions of instance complexity 1 are precisely those that Grossman et al.~refer to as \emph{strictly $D$-instance optimizable}.

\subsection{Related work}
\cite{GKN20} showed the following, among other results, regarding the instance complexity of specific Boolean functions.

\begin{lemma}[{\cite[Section 3]{GKN20}}]\label{lem:gkn}
For all positive integers $n, m$,
\begin{align*}
\IC(\XOR_n) = \IC(\IND_m) & = 1,\\
\IC(\MAJ_n) & \approx 2,\\
\IC(\AND_n) = \IC(\OR_n) & = n.
\end{align*}
\end{lemma}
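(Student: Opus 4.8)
The plan is to establish each of the five equalities by first pinning down the relevant complexity measures --- $\DT(f)$ and the certificate complexities $\C(f,x)$ at every input --- and then combining two general observations. The first is the universal sandwich $1 \le \IC(f) \le \DT(f)/\Cmin(f)$: the lower bound holds because the set of variables queried on any root-to-leaf path of a tree computing $f$ is a certificate for the corresponding inputs, so $\T(x) \ge \C(f,x)$; the upper bound holds by taking $\T$ to be a depth-$\DT(f)$ tree, so that $\T(x) \le \DT(f)$ for all $x$ while $\C(f,x) \ge \Cmin(f)$. The second observation, which drives the lower bounds, is that every $\T$ computing $f$ has depth at least $\DT(f)$, hence there is an input $x^\ast$ with $\T(x^\ast) \ge \DT(f)$, giving $\IC(f,\T) \ge \DT(f)/\C(f,x^\ast)$.

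For $\XOR_n$ and $\IND_m$ I would observe that the certificate complexity is the same at every input and is met on every input by a natural algorithm. For parity, $\C(\XOR_n,x) = n$ for all $x$ since flipping any unqueried coordinate flips the output; in fact every decision tree computing $\XOR_n$ queries all $n$ variables on every path, so $\T(x) = n = \C(\XOR_n,x)$ and $\IC(\XOR_n) = 1$. For indexing, $\C(\IND_m,(x,y)) = m+1$ at every input (one must reveal all $m$ addressing bits together with the single target bit they select), and the obvious algorithm --- read the $m$ addressing variables, then read the one target variable they point to --- makes exactly $m+1$ queries on every input, so $\IC(\IND_m) = 1$.

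For $\MAJ_n$, the first step is to determine the certificate complexities: at an input with output $1$ one must exhibit $\lceil n/2 \rceil$ ones, at an input with output $0$ one must exhibit $\lfloor n/2 \rfloor + 1$ zeros, and no smaller sets work, so $\Cmin(\MAJ_n) = \lceil n/2 \rceil$ and $\C(\MAJ_n) = \lfloor n/2 \rfloor + 1$. Since $\DT(\MAJ_n) = n$, the sandwich gives $\IC(\MAJ_n) \le n/\lceil n/2 \rceil \le 2$; conversely, every computing tree has depth at least $n$, so some input $x^\ast$ has $\T(x^\ast) \ge n$, whence $\IC(\MAJ_n,\T) \ge n/\C(\MAJ_n) = n/(\lfloor n/2 \rfloor + 1) = 2 - o(1)$. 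Hence $\IC(\MAJ_n) \approx 2$.

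For $\AND_n$ we have $\DT(\AND_n) = n$, $\C(\AND_n,1^n) = n$, and $\C(\AND_n,x) = 1$ for every $x \ne 1^n$, so $\Cmin(\AND_n) = 1$ and the sandwich gives $\IC(\AND_n) \le n$. For the matching lower bound, take any $\T$ computing $\AND_n$; on input $1^n$ it must query all $n$ variables, since an unqueried one could be flipped to $0$ without changing the output. Letting $x_j$ be the variable queried at the last internal node of that path and $y$ the input $1^n$ with coordinate $j$ flipped, we get $\AND_n(y) = 0$, so $\C(\AND_n,y) = 1$, yet $\T$ follows the same path on $y$ through its first $n-1$ all-ones answers before it can distinguish $y$ from $1^n$, so $\T(y) \ge n$ and $\IC(\AND_n,\T) \ge n$. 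Thus $\IC(\AND_n) = n$, and $\IC(\OR_n) = n$ follows because negating all inputs and the output is a bijection on decision trees preserving both query counts and certificate complexities. The main obstacles, both minor, are getting the adversary argument for $\AND_n$ exactly right (verifying the flipped input really forces $n$ queries, including for non-reduced trees that may re-query) and determining the precise value of $\IC(\MAJ_n)$ rather than just its $2 - o(1)$ asymptotics; both are elementary.
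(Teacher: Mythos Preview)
The paper does not prove this lemma itself; it is quoted from~\cite{GKN20} as prior work, so there is no in-paper proof to compare against. Your argument is correct and is essentially the standard one. The only slightly loose step is the justification for $\IND_m$ that ``one must reveal all $m$ addressing bits'': minimum certificates need not have this form (for instance, fixing $m-1$ addressing bits together with the two reachable target bits, when those targets agree, is also a size-$(m+1)$ certificate), but the bound $\C(\IND_m,(x,y)) \ge m+1$ does hold, since a certificate fixing only $a$ addressing bits must fix all $2^{m-a}$ reachable target bits, and $a + 2^{m-a} \ge m+1$ for every $0 \le a \le m$.
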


The notion of instance complexity and instance optimality of functions and algorithms have been studied from various angles. See, for example,~\cite{BGN23} and the references therein.

A Boolean function $f : \zone^n \to \zone$ is said to be \emph{monotone} if $x \leq y \implies f(x) \leq f(y)$. Here $x \leq y$ represents coordinate-wise inequality. In other words, $f$ is monotone if flipping a 1 to a 0 in any input can never change the function value from 0 to 1. Examples of monotone functions are $\AND, \OR, \MAJ$. \cite[Lemma~3.4]{GKN20} characterized monotone functions $f$ that satisfy $\IC(f) = 1$. They showed that the monotone functions satisfying $\IC(f) = 1$ are precisely those that depend on either 0 or 1 variable.

A natural upper bound on $\IC(f)$ is $\DT(f)/\Cmin(f)$. This is simply because an optimal algorithm (w.r.t.~query complexity) for $f$ witnesses this: the largest possible numerator and smallest possible denominator in the first expression of Definition~\ref{defn:instancecomp} are $\DT(f)$ (witnessed by an optimal decision tree algorithm for $f$) and $\Cmin(f)$, respectively. We show in the next section that this is tight for the class of symmetric Boolean functions $f$. In the following section we analyze the instance complexity of some graph properties, and show that this bound is tight in these cases as well. In the next section we show that such an equality does not hold true for general Boolean $f$.

\begin{remark}
    Given that certificate complexity is a more well-studied measure than \emph{minimum} certificate complexity, one might ask about the relationship between $\IC(f)$ and $\DT(f)/\C(f)$. i.e., can instance complexity be characterized by the ratio of the (worst-case) query complexity of $f$ to the (worst-case) certificate complexity of $f$? For $\AND, \OR$, the former quantity, $\IC(f)$, is $n$ (by Lemma~\ref{lem:gkn}) and the latter quantity, $\DT(f)/\C(f)$, is $1$. For $\MAJ_n$, both quantities are roughly 2. It would be interesting to find an example (or show that such an example does not exist) where $\IC(f) \ll \DT(f)/\C(f)$.
\end{remark}

\section{Instance complexity of symmetric Boolean functions}\label{sec:symm}
In this section we completely characterize instance complexity of symmetric Boolean functions. This is a generalization of~\cite[Examples~3.2,~3.3]{GKN20}. 
We first formally define symmetric functions below. For a positive integer $n$, we use $S_n$ to denote the group of permutations of $n$ elements. 

\begin{definition}[Symmetric functions]
A function $f : \zone^n \to \zone$ is symmetric if for all $\sigma \in S_n$ and for all $x \in \zone^n$ we have $f(x) = f(\sigma(x))$.
\end{definition}
Equivalently, a function is symmetric iff its output only depends on the Hamming weight of the input. Hence, we may identify a symmetric function $f$ with its associated \emph{predicate}, denoted $D_f$, and defined as
\[
D_f(i) = b \qquad \textnormal{if}~|x| = i \implies f(x) = b.
\]
For a symmetric function $f : \zone^n \to \zone$, let the integers $0 \leq \ell_0(f) \leq \ell_1(f) \leq n$ denote the end points of a largest interval of Hamming weights in which $f$ is a constant.\footnote{Here, ties are broken arbitrarily if the largest such interval is not unique. Note that the results that use these measures (Claim~\ref{claim:symmcmin} and Theorem~\ref{thm:symmic}) are independent of the choice of largest interval since $\ell_0 - \ell_1$ is invariant under this choice.}
See Figure~\ref{fig:symmpred} for a pictorial description.

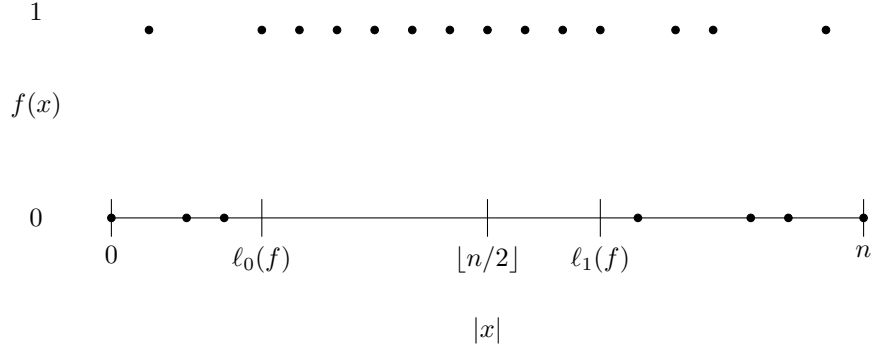
\begin{figure}
\begin{center}
    \begin{tikzpicture}[scale=0.5]
    \draw (-10,0)-- (10,0); 
    \node (yaxislabel) at (-12,3) {$f(x)$};
    \node (xaxislabel) at (0, -3) {$|x|$};
    \node (p0) at (-12,0) {$0$};
    \node (p1) at (-12,5.5) {$1$};
    \draw (0,.5) -- (0,-.5) node[below] {$\lfloor n/2\rfloor$};
    \draw (-10,.5) -- (-10,-.5) node[below] {$0$};
    \draw (10,.5) -- (10,-.5) node[below] {$n$};
    \draw (-6,.5) -- (-6,-.5) node[below] {$\ell_0(f)$};
    \draw (3,.5) -- (3,-.5) node[below] {$\ell_1(f)$};
    \draw[fill=black] (-1,5) circle (0.1);
    \draw[fill=black] (-2,5) circle (0.1);
    \draw[fill=black] (-3,5) circle (0.1);
    \draw[fill=black] (-4,5) circle (0.1);
    \draw[fill=black] (-5,5) circle (0.1);
    \draw[fill=black] (-6,5) circle (0.1);
    \draw[fill=black] (-7,0) circle (0.1);
    \draw[fill=black] (-8,0) circle (0.1);
    \draw[fill=black] (-9,5) circle (0.1);
    \draw[fill=black] (-10,0) circle (0.1);
    \draw[fill=black] (1,5) circle (0.1);
    \draw[fill=black] (2,5) circle (0.1);
    \draw[fill=black] (3,5) circle (0.1);
    \draw[fill=black] (4,0) circle (0.1);
    \draw[fill=black] (5,5) circle (0.1);
    \draw[fill=black] (6,5) circle (0.1);
    \draw[fill=black] (7,0) circle (0.1);
    \draw[fill=black] (8,0) circle (0.1);
    \draw[fill=black] (9,5) circle (0.1);
    \draw[fill=black] (10,0) circle (0.1);
    \draw[fill=black] (0,5) circle (0.1);
\end{tikzpicture}
\end{center}
\caption{Visual representation of the predicate $D_f$ of a symmetric Boolean function $f$. The interval $[\ell_0(f), \ell_1(f)]$ is the largest interval on which $f$ is constant.}
\label{fig:symmpred}
\end{figure}

We observe below that the minimum certificate complexity of a symmetric function $f$ equals $\ell_0(f) + n - \ell_1(f)$.
\begin{claim}\label{claim:symmcmin}
Let $f : \zone^n \to \zone$ be a symmetric Boolean function. Then
\[
\Cmin(f) = \ell_0(f) + n - \ell_1(f).
\]
\end{claim}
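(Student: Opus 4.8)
The plan is to prove the claim by establishing the two inequalities $\Cmin(f) \le \ell_0(f) + n - \ell_1(f)$ and $\Cmin(f) \ge \ell_0(f) + n - \ell_1(f)$ separately. For the upper bound, I would exhibit an explicit input $x$ together with a certificate of size $\ell_0(f) + n - \ell_1(f)$. Take the string $x$ with $x_1 = \cdots = x_{\ell_0(f)} = 1$, $x_{\ell_1(f)+1} = \cdots = x_n = 0$, and the remaining coordinates (those indexed $\ell_0(f)+1, \dots, \ell_1(f)$) set to, say, $0$. Let $S = \{1, \dots, \ell_0(f)\} \cup \{\ell_1(f)+1, \dots, n\}$, which has size $\ell_0(f) + n - \ell_1(f)$. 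Any $y$ agreeing with $x$ on $S$ has at least $\ell_0(f)$ ones (from the first block) and at most $\ell_0(f) + (\ell_1(f) - \ell_0(f)) = \ell_1(f)$ ones (since the only coordinates outside $S$ are the $\ell_1(f) - \ell_0(f)$ free ones), so $|y| \in [\ell_0(f), \ell_1(f)]$, and since $f$ is constant on that Hamming-weight interval, $f(y) = f(x)$. Hence $S$ is a certificate for $x$ and $\C(f,x) \le |S|$, giving the upper bound.

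For the lower bound I would show that every input $x$ has $\C(f,x) \ge \ell_0(f) + n - \ell_1(f)$. Fix any $x$ and any certificate $S$ for $x$; write $a$ for the number of ones of $x$ inside $S$ and $z$ for the number of zeros of $x$ inside $S$, so $|S| = a + z$. The inputs $y$ agreeing with $x$ on $S$ realize exactly the Hamming weights in the interval $[a, a + (n - |S|)] = [a, n - z]$ (we can freely set the $n - |S|$ coordinates outside $S$ to anything). For $S$ to be a certificate, $f$ must be constant on every string of each of these weights, i.e., $f$ is constant on the Hamming-weight interval $[a, n-z]$. By maximality of $[\ell_0(f), \ell_1(f)]$ among intervals of constancy, we get $\ell_1(f) - \ell_0(f) \ge (n - z) - a$, hence $|S| = a + z \ge n - (\ell_1(f) - \ell_0(f)) = \ell_0(f) + n - \ell_1(f)$. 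Taking the minimum over $x$ and over certificates $S$ yields $\Cmin(f) \ge \ell_0(f) + n - \ell_1(f)$.

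One subtlety I would be careful about: the definition of $\ell_0, \ell_1$ refers to ``the largest interval of Hamming weights in which $f$ is a constant,'' and I am implicitly using that this interval is the unique longest one (or at least that any interval of constancy has length at most $\ell_1(f) - \ell_0(f)$, which is all the lower-bound argument needs). This is immediate from the definition as stated, so no real obstacle arises; I just want to phrase the appeal to maximality cleanly. A second minor point is the edge case where $f$ is constant on all of $\zone^n$, in which case $\ell_0(f) = 0$, $\ell_1(f) = n$, and the formula correctly gives $\Cmin(f) = 0$ (the empty certificate works); the general argument above covers this without modification. The main conceptual step — and the only place any thought is needed — is the observation that a certificate $S$ for $x$ forces $f$ to be constant precisely on the contiguous Hamming-weight interval $[\,(\text{ones of }x\text{ in }S),\, n - (\text{zeros of }x\text{ in }S)\,]$; once that is in hand, both directions are bookkeeping.
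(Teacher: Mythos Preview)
Your proof is correct and follows essentially the same approach as the paper's. Both the upper bound (exhibit an input and pin down $\ell_0(f)$ ones and $n-\ell_1(f)$ zeros as a certificate) and the lower bound (a certificate fixing $a$ ones and $z$ zeros forces $f$ constant on the Hamming-weight interval $[a,n-z]$, whence $n-a-z \le \ell_1(f)-\ell_0(f)$ by maximality) match the paper's argument, the only cosmetic difference being that the paper phrases the lower bound by contradiction while you argue directly.
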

\begin{proof}
We prove the upper bound and lower bound separately.
\begin{itemize}
    \item For the upper bound, consider an input $x$ with Hamming weight $k \in [\ell_0(f), \ell_1(f)]$. Consider a set of indices $S_0$ of $\ell_0(f)$ many 1's and $S_1$ of $n - \ell_1(f)$ many 0's of it. Clearly such a set exists since $k \in [\ell_0(f), \ell_1(f)]$. Any input consistent with these input bits must have Hamming weight in $[\ell_0(f), \ell_1(f)]$, and thus must be a constant by assumption. Thus, $S_0 \cup S_1$ is a certificate for $x$ of size $\ell_0(f) + n - \ell_1(f)$.
    \item Towards the lower bound, assume towards a contradiction that there is an input $x$ with a certificate $C$ of size $w < \ell_0(f) + n - \ell_1(f)$. Suppose $C$ contains $m_0$ 0-indices and $m_1$ 1-indices, where $m_0 + m_1 = w$. The fact that $C$ is a certificate implies that $f$ must output the same value on all inputs of Hamming weight in $[m_1, n - m_0]$. The length of this interval is 
    \[
    n - m_0 - m_1 = n - w > \ell_1(f) - \ell_0(f).
    \]
    This contradicts the maximality of the interval $[\ell_0(f), \ell_1(f)]$, yielding a contradiction.\hfill\qed
\end{itemize}
\let\qed\relax
\end{proof}
As an immediate corollary, we obtain the following.
\begin{corollary}\label{cor:symmmaxcmin}
    The only symmetric functions $f : \zone^n \to \zone$ with $\Cmin(f) = n$ are the parity function $\XOR_n$ and its negation.
\end{corollary}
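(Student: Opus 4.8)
The plan is to read this off directly from Claim~\ref{claim:symmcmin}. By that claim, $\Cmin(f) = \ell_0(f) + n - \ell_1(f)$, so $\Cmin(f) = n$ holds if and only if $\ell_0(f) = \ell_1(f)$; that is, if and only if the largest interval of Hamming weights on which $f$ is constant has length $0$ (i.e.\ consists of a single weight). The first step is to unpack what $\ell_0(f) = \ell_1(f)$ means for the predicate $D_f$: if there were some $i \in \cbra{0, 1, \dots, n-1}$ with $D_f(i) = D_f(i+1)$, then $[i, i+1]$ would be an interval of length $1$ on which $f$ is constant, contradicting the maximality (and length-$0$-ness) of $[\ell_0(f), \ell_1(f)]$. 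Hence $D_f(i) \neq D_f(i+1)$ for every $i$, so $D_f$ strictly alternates along $0, 1, \dots, n$.

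The second step is to observe that a strictly alternating predicate on $\cbra{0,1,\dots,n}$ is determined by its value at $0$, and is one of exactly two functions: $D_f(i) = i \bmod 2$ for all $i$, which is the predicate of $\XOR_n$, or $D_f(i) = 1 - (i \bmod 2)$ for all $i$, which is the predicate of $\neg \XOR_n$. Since a symmetric function is determined by its predicate, $f \in \cbra{\XOR_n, \neg\XOR_n}$.

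The final step is the (trivial) converse: for $f = \XOR_n$ or $f = \neg\XOR_n$ the predicate alternates, so $\ell_0(f) = \ell_1(f)$ and Claim~\ref{claim:symmcmin} gives $\Cmin(f) = n$. I do not anticipate any real obstacle here — the only point requiring a sentence of care is the equivalence between ``$\ell_0(f) = \ell_1(f)$'' and ``$D_f$ alternates everywhere,'' which follows from the maximality built into the definition of $\ell_0(f), \ell_1(f)$.
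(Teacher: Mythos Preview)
Your proposal is correct and matches the paper's approach: the paper states this as an immediate corollary of Claim~\ref{claim:symmcmin} without further argument, and your unpacking of $\Cmin(f)=n \iff \ell_0(f)=\ell_1(f) \iff D_f$ alternates $\iff f\in\{\XOR_n,\neg\XOR_n\}$ is exactly the intended reasoning.
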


Our characterization of the instance complexity of symmetric Boolean functions is as follows.
\begin{theorem}[Restatement of Theorem~\ref{thm:mainsymintro}]\label{thm:symmic}
    Let $f : \zone^n \to \zone$ be a symmetric Boolean function. Then,
    \[
    \IC(f) = \frac{n}{\Cmin(f)} = \frac{n}{\ell_0(f) + n - \ell_1(f)}.
    \]
\end{theorem}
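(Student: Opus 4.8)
The plan is to establish the equality $\IC(f) = n/\Cmin(f)$ for symmetric $f$ by proving the two inequalities separately, using Claim~\ref{claim:symmcmin} to identify $\Cmin(f)$ with $\ell_0(f) + n - \ell_1(f)$. The upper bound $\IC(f) \leq \DT(f)/\Cmin(f) = n/\Cmin(f)$ is already noted in the text: since $f$ is symmetric, $\DT(f) = n$, and any optimal (full) decision tree that queries all $n$ variables on every input witnesses $\IC(f,x,\T) = n/\C(f,x) \leq n/\Cmin(f)$ for all $x$. So the real content is the lower bound $\IC(f) \geq n/\Cmin(f)$: for \emph{every} decision tree $\T$ computing $f$, there is some input $x$ on which $\T(x)/\C(f,x) \geq n/\Cmin(f)$.

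For the lower bound, fix an arbitrary decision tree $\T$ computing $f$. The key is to find an input that is simultaneously ``hard'' for $\T$ (forcing close to $n$ queries) and ``easy'' to certify (certificate size close to $\Cmin(f)$). I would use an adversary/path-following argument: starting at the root of $\T$, walk down the tree, answering each query so as to keep alive as many Hamming weights as possible — concretely, maintain a surviving interval $[a,b]$ of Hamming weights consistent with the answers so far (initially $[0,n]$), and when a variable is queried, answer $0$ or $1$ according to which choice shrinks $[a,b]$ less (answering $1$ increments $a$, answering $0$ decrements $b$; pick the side keeping $b - a$ larger, breaking ties toward preserving the constant interval $[\ell_0(f),\ell_1(f)]$). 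After $t$ queries along this path the interval has length at least $n - t$. Crucially, arrange the adversary answers so that when the leaf is reached at depth $t = \T(x)$, the induced subcube still contains an input $x$ whose Hamming weight lies in $[\ell_0(f), \ell_1(f)]$; then $f$ is forced to be constant on that whole surviving interval, but a leaf outputs a single bit, which is only consistent if the surviving interval has collapsed enough — more precisely, the leaf's output must equal $D_f$ on every surviving weight, and since the largest monochromatic interval of $D_f$ has length exactly $\ell_1(f) - \ell_0(f)$, we need $n - t \leq \ell_1(f) - \ell_0(f)$, i.e. $t \geq n - (\ell_1(f) - \ell_0(f)) = \ell_0(f) + n - \ell_1(f) = \Cmin(f)$. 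Wait — that only gives $\T(x) \geq \Cmin(f)$, which is too weak; I need $\T(x)$ close to $n$ while $\C(f,x) = \Cmin(f)$.

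So the argument must be refined: rather than letting the interval shrink slowly, I should follow the path of \emph{maximum depth} consistent with keeping the certificate complexity of the final input equal to $\Cmin(f)$. The right approach is to consider the input $x^\star$ of Hamming weight exactly $\ell_0(f)$ (or any weight in the constant interval achieving $\C(f,x^\star) = \Cmin(f)$, which exists by Claim~\ref{claim:symmcmin}), and argue that on $x^\star$ — or on a well-chosen sibling of it — any decision tree $\T$ must query essentially all $n$ variables. The cleaner route: pick the input $x^\star$ and run $\T$ on it; if $\T(x^\star)$ is already $\geq n - (\ell_1(f) - \ell_0(f)) \cdot (\text{something})$ we may not be done, so instead use an exchange argument — if $\T$ stops early on $x^\star$ having queried a set $Q$ with $|Q| < n$, then some variable $x_i$, $i \notin Q$, is unqueried; flipping $x_i$ moves the Hamming weight by one, and by choosing the flip direction and iterating we can reach an input whose weight exits $[\ell_0(f),\ell_1(f)]$ — hence whose $f$-value differs from $f(x^\star)$ — yet which is consistent with the path of $\T$ on $x^\star$, contradicting correctness \emph{unless} $\T$ reroutes, and rerouting costs extra queries. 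Formalizing this to extract the exact bound $\T(x)/\C(f,x) \geq n/\Cmin(f)$ on some input $x$ is the main obstacle; I expect it to require carefully tracking, for the worst input, that the number of queries forced is at least $n$ times the ratio, by combining the adversary interval argument above (which lower-bounds depth by the ``distance to a monochromatic boundary'') with the fact that inputs at the \emph{boundary} weights $\ell_0(f)$ and $\ell_1(f)$ have small certificates but the adversary can steer toward them. The tie-breaking and the exact choice of which boundary input to charge against are the fiddly points; everything else (the upper bound, the translation via Claim~\ref{claim:symmcmin}, and the Parity corollary via Corollary~\ref{cor:symmmaxcmin}) is routine.
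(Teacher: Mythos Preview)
Your upper bound is correct and matches the paper. The lower bound, however, has a genuine gap: neither of the two strategies you sketch actually delivers an input $x$ with $\T(x)=n$ and $\C(f,x)=\Cmin(f)$.

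Your first adversary (keep the surviving weight interval as wide as possible) you already diagnosed correctly: it only yields $\T(x)\geq \Cmin(f)$. Your second approach (fix a specific $x^\star$ of weight $\ell_0(f)$ and argue the tree must query all $n$ bits on it, via flipping unqueried bits) is simply false. For a concrete counterexample, take any symmetric $f$ with $\ell_0(f)=1$ and $\ell_1(f)=n-1$, so $\Cmin(f)=2$. On the input $x^\star=10^{n-1}$ a tree may query $x_1$ (see a $1$), then $x_2$ (see a $0$), and stop after two queries: the remaining weights lie in $[1,n-1]$, which is monochromatic. So $\T(x^\star)=2\ll n$. Your ``flip unqueried bits and iterate'' fix does not help, because once you flip a bit the tree may take a different path, and you have not controlled where that leads; you acknowledge this (``unless $\T$ reroutes'') but then leave the argument unfinished.

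The missing idea is very short. The objective for the adversary is not to keep the surviving interval \emph{wide}, but to keep it \emph{straddling the boundary} $\{\ell_0(f)-1,\ell_0(f)\}$ for as long as possible (assuming WLOG $\ell_0(f)\geq 1$). The paper's adversary simply answers $0$ to the first $n-\ell_0(f)$ queries and $1$ to every query thereafter. One checks in one line that after any $k<n$ queries along this path the set of consistent Hamming weights still contains both $\ell_0(f)-1$ and $\ell_0(f)$; since $D_f(\ell_0(f)-1)\neq D_f(\ell_0(f))$, the tree cannot terminate, so it must make all $n$ queries. The input reached at the leaf has Hamming weight exactly $\ell_0(f)$, and by (the proof of) Claim~\ref{claim:symmcmin} its certificate complexity is $\ell_0(f)+n-\ell_1(f)=\Cmin(f)$. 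Hence $\IC(f,x,\T)=n/\Cmin(f)$ for this $x$, for every $\T$, and the lower bound follows. The point you were missing is that the adversary should aim at a \emph{boundary} weight of the maximal constant interval, not at the interval's interior; this simultaneously forces maximal depth and minimal certificate size.
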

\begin{proof}
    The last equality follows from Claim~\ref{claim:symmcmin}.
    We prove the upper bound and lower bound of the first equality separately.
    \begin{itemize}
        \item For the upper bound, consider the naive query algorithm that queries all the input bits. The instance complexity of $f$ w.r.t.~this algorithm is clearly $\frac{n}{\Cmin(f)}$.
        \item For the lower bound proof, assume without loss of generality that $\ell_0(f) \neq 0$. Indeed, if $\ell_0(f) = 0$, then we can replace $f$ by its complement $1-f$. This does not change the instance complexity and the minimum certificate complexity. The transformation can also easily be seen to preserve the quantity $\ell_0(f) + n - \ell_1(f)$. So for the rest of the proof we assume $\ell_0(f) \neq 0$, and thus $D_f(\ell_0(f) - 1) \neq D_f(\ell_0(f))$.
        
        Towards a lower bound, consider a decision tree $T$ that computes $f$. Consider the path of $T$ that answers $0$ to the first $n - \ell_0(f)$ variables, and $1$ to the all of the $\ell_0(f)$ variables after that. There must exist such a path for the following reason. If the tree terminates before this, there exist inputs reached on this path that output different answers (there are inputs consistent with the path so far that have Hamming weights $\ell_0(f)-1$ and $\ell_0(f)$). Thus, $T$ cannot compute $f$ in this case, contradicting our assumption. The Hamming weight of an input that reaches this leaf is $\ell_0(f)$. The certificate complexity of such an input is $\ell_0(f) + n - \ell_1(f)$: a certificate is a set of $\ell_0(f)$ 1's and $n - \ell_1(f)$ 0's. This concludes the proof of the lower bound.\hfill\qed
\end{itemize}
\let\qed\relax
\end{proof}

As a corollary we obtain a complete characterization of symmetric Boolean functions that are strictly $D$-instance optimizable (that is, functions with instance complexity 1). We view this as an analogous result to Grossman et al.'s characterization of monotone functions that are strictly $D$-instance optimizable.
\begin{corollary}
    The only symmetric Boolean functions that are strictly $D$-instance optimizable are the parity function $\XOR_n$ and its negation.
\end{corollary}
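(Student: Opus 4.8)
The plan is to obtain this as a one-line deduction from the two facts already established in this section: the exact formula $\IC(f) = n/\Cmin(f)$ for symmetric $f$ (Theorem~\ref{thm:symmic}), and the identification of the symmetric functions with maximal minimum certificate complexity (Corollary~\ref{cor:symmmaxcmin}).

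First I would recall that ``strictly $D$-instance optimizable'' is precisely the condition $\IC(f) = 1$. By Theorem~\ref{thm:symmic}, for a symmetric $f : \zone^n \to \zone$ we have $\IC(f) = n/\Cmin(f)$. Since every input has the trivial certificate $[n]$, we always have $\Cmin(f) \leq n$, and therefore $n/\Cmin(f) = 1$ if and only if $\Cmin(f) = n$. Chaining these observations, $f$ is strictly $D$-instance optimizable $\iff \IC(f) = 1 \iff \Cmin(f) = n$.

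It then remains only to invoke Corollary~\ref{cor:symmmaxcmin}, which asserts that the only symmetric functions on $n$ bits with $\Cmin(f) = n$ are $\XOR_n$ and its negation; this closes the characterization in both directions. For the converse direction one can additionally note that these functions do achieve instance complexity exactly $1$ (and not merely $\leq 1$): plugging $\Cmin = n$ into Theorem~\ref{thm:symmic} gives $\IC = n/n = 1$, consistent with the value $\IC(\XOR_n) = 1$ recorded in Lemma~\ref{lem:gkn}. There is no real obstacle here --- all of the content lives in Claim~\ref{claim:symmcmin} and Theorem~\ref{thm:symmic}, and this corollary is simply the statement that $n/\Cmin(f) = 1$ pins down $f$ among symmetric functions.
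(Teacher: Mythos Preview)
Your proposal is correct and matches the paper's approach exactly: the paper's entire proof is the single sentence ``It follows from Corollary~\ref{cor:symmmaxcmin} and Theorem~\ref{thm:symmic},'' and you have simply unpacked that deduction with a bit more detail.
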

\begin{proof}
    It follows from Corollary~\ref{cor:symmmaxcmin} and Theorem~\ref{thm:symmic}.
\end{proof}

\section{Instance complexity of some graph properties}\label{sec:graph}
In this section we give tight bounds on the instance complexity of the graph properties of Connectivity and $k$-Clique. In the setting of graph properties, our input is a string in $\binom{n}{2}$, one variable per edge. A variable being set to 1 means the corresponding edge is present, and a variable being set to 0 means the corresponding edge is absent.
Thus, we identify an unweighted simple graph $G$ with its corresponding $\binom{n}{2}$-bit string. We now list the problems of interest to us.
\begin{definition}[Connectivity]\label{defn:conn}
    For a positive integer $n > 0$, define the function $\CONN : \binom{n}{2} \to \zone$ as $\CONN(G) = 1$ iff $G$ is connected.
\end{definition}
\begin{definition}[$k$-Clique]\label{defn:kcl}
    For positive integers $0 < k \leq n$, define the function $\kCL : \binom{n}{2} \to \zone$ as $\kCL(G) = 1$ iff $G$ contains a $k$-clique as a subgraph.
\end{definition}

Our main theorem of this section is as follows.
\begin{theorem}[Restatement of Theorem~\ref{thm:graphintro}]\label{thm:graph}
    Let $n$ be a sufficiently large positive integer and $k$ be a positive integer satisfying $k^3 \leq n^2/4$. Then,
    \[
    \IC(\CONN) = \frac{\DT(\CONN)}{\Cmin(\CONN)} = \frac{\binom{n}{2}}{n-1}, \qquad \IC(\kCL) = \frac{\DT(\kCL)}{\Cmin(\kCL)} = \frac{\binom{n}{2}}{\binom{k}{2}}.
    \]
\end{theorem}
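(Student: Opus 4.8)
The plan is to prove each of the two equalities by establishing the relevant $\DT$ and $\Cmin$ values and then showing that the generic upper bound $\IC(f) \le \DT(f)/\Cmin(f)$ is matched by a lower bound. Recall the upper bound is automatic via the naive decision tree that queries all $\binom{n}{2}$ edges, so the entire content is (i)~computing $\DT$ and $\Cmin$ and (ii)~the matching lower bound on $\IC$.

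First I would handle the elementary parameter computations. For $\CONN$: it is classical (an evasiveness result) that $\DT(\CONN) = \binom{n}{2}$, i.e.\ connectivity is evasive. For $\Cmin(\CONN)$: a $0$-certificate (disconnectedness) can be witnessed by exhibiting a cut, the cheapest being a single vertex isolated from the rest, which fixes $n-1$ edges to $0$; a $1$-certificate (connectedness) requires a spanning connected subgraph, the cheapest being a spanning tree with $n-1$ edges fixed to $1$. Hence $\Cmin(\CONN) = n-1$. For $\kCL$: evasiveness of $k$-clique containment is known (this is where the hypothesis $k = O(n^{2/3})$ enters — one should cite or reproduce the relevant evasiveness result, e.g.\ via the topological approach or a direct adversary argument in this parameter regime), so $\DT(\kCL) = \binom{n}{2}$. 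A $1$-certificate for $\kCL$ fixes the $\binom{k}{2}$ edges of some $k$-clique to $1$; a $0$-certificate must hit every potential $k$-clique, which is far more expensive, so $\Cmin(\kCL) = \binom{k}{2}$.

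Next comes the heart of the argument, the lower bound $\IC(f) \ge \DT(f)/\Cmin(f)$ for each of these two functions. Fix any decision tree $T$ computing $f$; I want to produce an input $x$ with $T(x) = \binom{n}{2}$ (the tree queries everything) while $\C(f,x) = \Cmin(f)$. For $\CONN$, I would use the adversary strategy underlying the evasiveness proof: the adversary answers queries so as to keep both ``$G$ is connected'' and ``$G$ is disconnected'' alive until all edges are queried, and — crucially — so that the final graph reached is either a spanning tree (if forced to be connected) or a graph with exactly one isolated vertex (if forced to be disconnected). That is, one strengthens the evasiveness adversary so that the terminal input has minimum certificate complexity. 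On that input, $T$ makes $\binom{n}{2}$ queries but $\C(f,x) = n-1 = \Cmin(\CONN)$, giving the ratio. For $\kCL$ one mimics this: run the evasiveness adversary for $k$-clique so that, on the leaf it forces, the resulting graph is a single $k$-clique on $k$ vertices plus isolated vertices (a $1$-input with certificate complexity exactly $\binom{k}{2}$) — again $T(x) = \binom{n}{2}$ while $\C(f,x) = \Cmin(\kCL)$.

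The main obstacle is making the previous paragraph rigorous: it is not enough to know the functions are evasive; one needs an adversary that is evasive \emph{and} steers the final input to one of minimum certificate complexity. I expect the cleanest route is to observe that in the standard evasiveness adversary for these monotone graph properties, the terminal graph is already (or can be arranged to be) a minimal $1$-instance or a maximal $0$-instance — i.e.\ a spanning tree / isolated-vertex cut for $\CONN$, and a lone $k$-clique / its complement for $\kCL$ — and minimal/maximal instances of monotone graph properties have certificate complexity exactly $\Cmin$. The $k = O(n^{2/3})$ constraint should be tracked carefully since it is precisely the regime where $k$-clique evasiveness is known unconditionally; I would cite the appropriate evasiveness theorem rather than reprove it. Once the steered-adversary claim is in hand, both equalities follow by combining it with the generic upper bound and the parameter computations above.
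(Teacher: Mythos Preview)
Your overall strategy matches the paper's: compute $\DT$ and $\Cmin$, then show the generic upper bound $\IC(f)\le \DT(f)/\Cmin(f)$ is tight via evasiveness. Two points deserve correction, though.

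First, the ``steered adversary'' you propose for the $\IC$ lower bound is unnecessary, and the paper's argument is considerably simpler. You do not need the terminal input to be a spanning tree or a lone $k$-clique; you only need it to be a $1$-input. The key observation you are missing is that \emph{every} $1$-input to $\CONN$ has certificate complexity exactly $n-1$ (any connected graph contains a spanning tree), and \emph{every} $1$-input to $\kCL$ has certificate complexity exactly $\binom{k}{2}$ (any graph containing a $k$-clique has that clique as a certificate). So the paper simply argues: evasiveness forces some path of length $\binom{n}{2}$; just before the last query the value is undetermined, so setting the last edge to $1$ yields a $1$-input on which the tree made $\binom{n}{2}$ queries but whose certificate complexity is automatically $\Cmin$. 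No steering is required, and you avoid having to strengthen the evasiveness adversary.

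Second, you misidentify where the hypothesis $k=O(n^{2/3})$ enters. The paper cites evasiveness of $\kCL$ unconditionally (via~\cite{Bol76}); the constraint on $k$ is used only in the $\Cmin$ computation. Specifically, by Tur{\'a}n's theorem a $0$-certificate for $\kCL$ must have size at least $\binom{n}{2}-\frac{n^2(k-2)}{2(k-1)}$, and one checks that this exceeds $\binom{k}{2}$ precisely when $k=O(n^{2/3})$, so that $\Cmin(\kCL)=\binom{k}{2}$ in this regime. Your ``far more expensive'' for $0$-certificates should be made quantitative via Tur{\'a}n, and the parameter restriction moved here rather than to the evasiveness citation.
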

\begin{proof}
    We first note that both of the graph properties have maximal query complexity, then analyze their $\Cmin$ values, and finally show the required bounds on their $\IC$ values.
    \begin{itemize}
        \item We first note that $\CONN$ and $\kCL$ are known to be \emph{evasive} graph properties, that is, their query complexity is $\binom{n}{2}$ (see~\cite{LY02} and~\cite{Bol76}). Thus,
        \[
        \DT(\CONN) = \DT(\kCL) = \binom{n}{2}.
        \]
        \item Next, the certificate complexity of $1$-inputs to $\CONN$ equals $n-1$: the total number of connected components in the graph must be 1 after querying all certificate variables, implying that the certificate size is at least $n-1$. In the other direction, a spanning tree with $n-1$ edges serves as a certificate for connectivity. It is easy to see that a certificate for a $0$-input cannot have less than $n-1$ edges since any such certificate must completely contain a cut, and the smallest possible cut is that defined by a single vertex of size $n-1$.
        For all values of $k$, the certificate complexity of $1$-inputs to $\kCL$ equals $\binom{k}{2}$ since any certificate must contain a $k$-clique (otherwise setting all variables outside the certificate gives a 0-input), and a single $k$-clique serves as a certificate.
        On the other hand, in order to certify $0$-inputs, we may assume that a certificate only queries 0-variables (otherwise simply drop 1-variables queried to obtain a smaller certificate). Moreover, the certificate must have the property that even if all variables outside it are set to $1$, then the graph does not contain a $k$-clique. Tur{\'a}n's theorem~\cite{Tur41} (also see~\cite{Aig95, Bol04}) states that any graph that does not contain a $k$-clique must contain at most $\frac{n^2(k-2)}{2(k-1)}$ edges. This implies that a certificate for $0$-inputs must contain at least $\binom{n}{2} - \frac{n^2(k-2)}{2(k-1)}$ variables (otherwise set all variables outside the certificate to 1, which yields a graph containing a $k$-clique by Tur{\'a}n's theorem). Moreover, there exists a graph achieving this bound~\cite{Tur41}. Thus,
        \[
        \Cmin(\CONN) = n-1, \qquad \Cmin(\kCL) = \min\cbra{\binom{k}{2}, \binom{n}{2} - \frac{n^2(k-2)}{2(k-1)}}.
        \]
By our assumption, we have $k^3 \leq n^2/4$.
        Thus,
        \begin{align*}
            \binom{k}{2} & = \frac{k(k-1)}{2} \leq \frac{k^2}{2} \leq \frac{n^2}{8k} \leq \frac{n^2}{8(k-1)} \leq \frac{n(n-k+1)}{2(k-1)} = \binom{n}{2} - \frac{n^2(k-2)}{2(k-1)}.
        \end{align*}
Here, the last inequality holds because $n/4 \leq n - k + 1 \iff k \leq 3n/4 + 1$, which is true for sufficiently large $n$ as our assumption guarantees $k \leq n^{2/3}/4^{1/3}$ , and the last equality follows from straightforward algebra.
Hence $\Cmin(\kCL) = \binom{k}{2}$ in this regime.
        \item To see the claimed bounds on instance complexity, first recall that $\IC(f) \leq \DT(f)/\Cmin(f)$ holds for all Boolean functions $f$. Consider a query algorithm solving $\CONN$. Since $\CONN$ is evasive, there exists an input such that the function value remains undetermined even after $\binom{n}{2} - 1$ queries. If the unqueried edge is $e$, this means that the graph with $e$ absent (and the other edges consistent with the queries so far) is not connected, and the graph with $e$ present is connected. The graph with $e$ present has a 1-certificate of size $n-1$, and hence 
        \[
        \IC(\CONN) \geq \frac{\binom{n}{2}}{n-1}.
        \]
        Consider a query algorithm solving $\kCL$. Again, since $\kCL$ is evasive, this implies existence of an input whose function value is undetermined before the last query. Just as in the argument for $\CONN$, let the unqueried edge be denoted by $e$. The graph with $e$ absent (and the other edges consistent with the queries so far) does not contain a $k$-clique, and the graph with $e$ present contains a $k$-clique. The graph with $e$ present has a 1-certificate of size $\binom{k}{2}$, and hence
        \[
        \IC(\CONN) \geq \frac{\binom{n}{2}}{\binom{k}{2}}.
        \]
        \end{itemize}
\end{proof}

\section{Instance complexity of some linear threshold functions}
In Section~\ref{sec:symm} we characterized the instance complexity of all symmetric Boolean functions. In particular, Theorem~\ref{thm:symmic} shows that $\IC(f) = \DT(f)/\Cmin(f)$ for all symmetric Boolean functions $f$. We also showed this bound to hold true for specific graph properties is Section~\ref{sec:graph}. This raises the natural question of whether $\IC(f) = \DT(f)/\Cmin(f)$ holds true for \emph{all} Boolean $f$. We show in this section that this is not the case in a very strong sense, and exhibit two examples witnessing this. The two examples we exhibit belong to the class of \emph{linear threshold functions}: a Boolean function $f : \zone^n \to \zone$ is said to be a linear threshold function if there exist $w_0, w_1, \dots, w_n \in \mathbb{R}$ such that $f(x) = \mathrm{sgn}(w_0 + \sum_{i = 1}^n w_ix_i)$. Here, $\mathrm{sgn}(\cdot)$ is defined to the the function that outputs 0 on a negative input, 1 on a positive input, and is undefined when the input equals 0.

The first example is the Greater-Than function that takes two $n$-bit strings as input and outputs 1 iff the first string is lexicographically larger than the second one.
\begin{definition}[Greater-Than]\label{defn:gt}
    For a positive integer $n$, the Greater-Than function on $2n$ inputs, denoted $\GT_n$, is defined by $\GT_n(x_1, \dots, x_n, y_1, \dots, y_n) = 1$ iff the integer whose binary representation is $x_n \dots, x_1$ is strictly larger than the integer whose binary representation is $y_n, \dots, y_1$. 
\end{definition}
The second example is the Odd-Max-Bit function that takes an $n$-bit string as input and outputs 1 iff the right-most variable with value 1 has an odd index.
\begin{definition}[Odd-Max-Bit]\label{defn:omb}
    For a positive integer $n$, the Odd-Max-Bit function on $n$ inputs, denoted $\OMB_n$, is defined by
    \[
    \OMB_n(x_1, \dots, x_n) = 1 \iff \max\cbra{i \in [n] : x_i = 1}~\textnormal{is odd}.
    \]
Define $\OMB(0^n) = 0$.
\end{definition}

To see why $\GT_n$ and $\OMB_n$ are linear threshold functions, observe that
\begin{align*}
    \GT_n(x_1, \dots, x_n, y_1, \dots, y_n) & = \mathrm{sgn}\left(\sum_{i = 1}^n (2^{i-1}x_i - 2^{i-1}y_i) - 0.5\right),\\
    \OMB_n(x_1, \dots, x_n) & = \mathrm{sgn}\left(\sum_{i = 1}^n (-2)^{i+1}x_i - 0.5\right).
\end{align*}

\begin{theorem}[Restatement of Theorem~\ref{thm:gtombresultsintro}]\label{thm:gtombresults}
    For all positive integers $n$, we have
    \begin{align*}
        \DT(\GT_n) & = 2n, \qquad \Cmin(\GT_n) = 2, \qquad \IC(\GT_n) \leq 2\\
        \DT(\OMB_n) & = n, \qquad \Cmin(\OMB_n) = 1, \qquad \IC(\OMB_n) \leq 2.
    \end{align*}
\end{theorem}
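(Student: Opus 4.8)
The plan is to treat the six assertions in three groups --- the two decision-tree complexities, the two minimum certificate complexities, and the two instance-complexity upper bounds --- with the last group being where the actual work lies.

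For the evasiveness statements $\DT(\GT_n)=2n$ and $\DT(\OMB_n)=n$ the upper bounds are trivial, so I only need to lower-bound $\DT$ by the number of variables. The cleanest route is the classical inequality $\deg(f)\le\DT(f)$: I would exhibit the top-degree Fourier coefficient and check it is nonzero. For $\GT_n$, identifying $x$ (resp.\ $y$) with its binary value $a$ (resp.\ $b$) in $\{0,\dots,2^n-1\}$ and writing $c_a=(-1)^{s(a)}$ with $s(a)$ the number of $1$s in the binary expansion of $a$, the top coefficient is proportional to $\sum_{a>b}c_ac_b=\tfrac12\big((\sum_a c_a)^2-\sum_a c_a^2\big)=\tfrac12(0-2^n)=-2^{n-1}\ne 0$. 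For $\OMB_n$, grouping inputs by the position $w$ of the highest $1$, the sum over the $2^{w-1}$ free lower bits vanishes unless $w=1$, leaving top coefficient $\pm1\ne 0$. (Alternatively one gives a direct adversary argument, or uses that $\OMB_n$ is a read-once AND/OR formula; the Fourier computation is simply uniform.) The minimum certificate complexities are then immediate: fixing $(x_n,y_n)=(1,0)$ forces $\sum_i 2^i(x_i-y_i)\ge 2^n-(2^n-2)>0$, so $\Cmin(\GT_n)\le 2$, and no single bit makes $\GT_n$ constant, so $\Cmin(\GT_n)=2$; and for odd $n$, fixing $x_n=1$ puts the highest $1$ at the odd index $n$, forcing $\OMB_n=1$, so $\Cmin(\OMB_n)=1$.

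For $\IC(\GT_n)\le 2$ I would use the ``scan from the top'' decision tree $\T$ that queries the pairs $(x_n,y_n),(x_{n-1},y_{n-1}),\dots$ in order, halts at the first pair with $x_i\ne y_i$ (outputting accordingly), and outputs $0$ if none is found. If $j$ is the largest index with $x_j\ne y_j$, then $\T$ makes $2(n-j+1)$ queries while $\C(\GT_n,x)=(n-j)+2$: a certificate must touch (contain a bit of) every pair indexed $j$ through $n$ --- an untouched high pair can be set to reverse the output --- and must fix both bits of pair $j$, while fixing, for each higher pair, the single bit pinning down $x_i\ge y_i$ (or $\le$) suffices. This gives ratio $2(n-j+1)/\big((n-j+1)+1\big)<2$. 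On a tie input $x=y$ the tree makes $2n$ queries, and $\C(\GT_n,x)=n$, because for each coordinate $i$ a certificate must fix the one bit that prevents pair $i$ from being set to $(1,0)$, and those $n$ bits suffice. Hence the worst ratio is exactly $2$, so $\IC(\GT_n)\le 2$; this uses nothing about the parity of $n$. For $\IC(\OMB_n)<2$ (with $n$ odd) I would use the analogous tree querying $x_n,x_{n-1},\dots$ and halting at the first $1$. On an input whose highest $1$ is at position $w$ it makes $n-w+1$ queries and $\C(\OMB_n,x)=1+\lceil(n-w)/2\rceil$ (fix $x_w=1$ and kill the positions above $w$ of the wrong parity; the positions above $w$ of the right parity are genuine don't-cares, and all of these fixings are necessary), giving ratio at most $2(n-w+1)/(n-w+2)<2$; on $0^n$ it makes $n$ queries while $\C(\OMB_n,0^n)=(n+1)/2$ (all $(n+1)/2$ odd positions must be fixed to $0$, and that suffices), giving ratio $2n/(n+1)<2$. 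Taking the maximum over inputs yields $\IC(\OMB_n)\le 2n/(n+1)<2$.

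The main obstacle is the certificate-complexity bookkeeping for these two functions, in particular the lower bounds at the extremal inputs, where one must simultaneously argue that many coordinates are genuine don't-cares (so the certificates are smaller than a naive count suggests) and that one coordinate of each pair/parity class is unavoidable. Establishing that a tie input of $\GT_n$ needs exactly $n$ fixed bits and that $0^n$ of $\OMB_n$ needs exactly $(n+1)/2$ is what produces the constants $2$ and $2n/(n+1)$, and is essentially the only place where (still elementary) care is required.
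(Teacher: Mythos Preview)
Your proposal is correct and, for the substantive part (the instance-complexity upper bounds), follows exactly the paper's approach: the same scan-from-the-top decision trees for $\GT_n$ and $\OMB_n$, the same case split on the highest differing pair / highest $1$, and the same certificate-complexity counts (including $\C(\GT_n,(x,x))=n$ and $\C(\OMB_n,0^n)=(n+1)/2$). The only real difference is in the evasiveness proofs: the paper shows $\DT(\GT_n)=2n$ by an explicit adversary (answer so that every completed pair ends up tied), whereas you compute the top Fourier coefficient $\sum_{a>b}(-1)^{s(a)+s(b)}=-2^{n-1}$ and invoke $\deg\le\DT$; for $\OMB_n$ both use the degree bound, but the paper reads off the leading monomial from a recursive M\"obius expansion rather than a Fourier sum. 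Your degree argument for $\GT_n$ is a bit slicker; the paper's adversary is more self-contained.

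One small wrinkle: your assertion that on a non-tie input a certificate ``must fix both bits of pair $j$'' is not fully justified as stated (a certificate could in principle dip below index $j$ instead), so the exact value $\C(\GT_n,(x,y))=(n-j)+2$ needs a line more of argument. This does not affect the theorem: the weaker bound $\C\ge n-j+1$, which follows immediately from your ``every pair $\ge j$ must be touched'' observation and is exactly what the paper uses, already gives ratio at most $2(n-j+1)/(n-j+1)=2$.
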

Before we prove the theorem, we require the following properties of Boolean functions. Every Boolean function $f : \zone^n \to \zone$ has a unique multilinear polynomial expansion of the form $f(x) = \sum_{S \subseteq [n]}\widetilde{f}(S) \prod_{i \in S}x_i$ where each $\widetilde{f}(S)$ is a real number. This expansion is sometimes referred to as the \emph{M\"obius expansion} of $f$. The \emph{degree} of $f$, denoted by $\deg(f)$, is the maximum degree of a monomial in its M\"obius expansion that has a non-zero coefficient. It is not hard to show that a depth-$d$ decision tree for $f$ induces a degree-$d$ polynomial computing $f$: sum up the indicator polynomials of each 1-leaf, and each of these indicator polynomials can easily be seen to have degree at most the depth of their corresponding leaf. This yields the following folklore lemma.
\begin{lemma}[Folklore]\label{lem:degdt}
    Let $f : \zone^n \to \zone$ be a Boolean function. Then $\DT(f) \geq \deg(f)$.
\end{lemma}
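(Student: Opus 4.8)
The plan is to convert an optimal-depth decision tree for $f$ into a polynomial of matching degree, and then invoke uniqueness of the multilinear (M\"obius) expansion. Concretely, I would fix a decision tree $\T$ computing $f$ with $\mathrm{depth}(\T) = \DT(f) =: d$, and associate to every root-to-leaf path an \emph{indicator polynomial}: if the path queries variables $x_{i_1}, \dots, x_{i_k}$ with answers $b_1, \dots, b_k \in \zone$ (so $k \le d$), set
\[
p_{\mathrm{path}}(x) = \prod_{j=1}^{k} \bigl( b_j x_{i_j} + (1 - b_j)(1 - x_{i_j}) \bigr).
\]
Each factor is an affine polynomial equal to $1$ when $x_{i_j} = b_j$ and $0$ otherwise, so on Boolean inputs $p_{\mathrm{path}}(x) = 1$ precisely when $x$ is consistent with the path and $0$ otherwise; moreover $p_{\mathrm{path}}$ has degree at most $k \le d$.

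Next I would sum the indicator polynomials over the leaves labeled $1$, defining $p(x) := \sum_{\mathrm{path}\, \to\, 1} p_{\mathrm{path}}(x)$. Since every input $x \in \zone^n$ follows exactly one root-to-leaf path in $\T$, exactly one term in this sum is nonzero at a given $x$, and its value is $1$ iff that unique path ends at a $1$-leaf, i.e.\ iff $\T(x) = f(x) = 1$. Hence $p(x) = f(x)$ for all $x \in \zone^n$, and $\deg(p) \le d$ by the previous paragraph. Finally I would multiply out $p$ and reduce each power $x_i^t$ to $x_i$ (valid since only Boolean inputs matter), obtaining a multilinear polynomial that still represents $f$ on $\zone^n$ and whose degree has not increased. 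By uniqueness of the multilinear expansion, this polynomial \emph{is} the M\"obius expansion of $f$, so $\deg(f) \le d = \DT(f)$.

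There is no substantial obstacle here — the statement is folklore — and the only points deserving a line of justification are that the leaf indicator polynomials form a partition of unity over the cube (so that $p$ genuinely computes $f$ everywhere, not merely somewhere), and that passing to the multilinear form does not raise the degree, which is immediate from $x_i^2 = x_i$ on $\zone$.
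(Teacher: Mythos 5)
Your proof is correct and follows exactly the same route as the paper: take an optimal decision tree, write each $1$-leaf's path indicator as a product of affine factors (degree at most the leaf's depth), sum over $1$-leaves to get a polynomial computing $f$ of degree at most $\DT(f)$, and invoke uniqueness of the multilinear expansion. One small observation: since a decision tree never re-queries a variable along a path, each $p_{\mathrm{path}}$ is already multilinear, so the final $x_i^t \mapsto x_i$ reduction step is vacuous here, though harmless to mention.
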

We now prove Theorem~\ref{thm:gtombresults}.

\begin{proof}[of Theorem~\ref{thm:gtombresults}]
    We first show that $\DT(\GT_n) = 2n$ and $\DT(\OMB_n) = n$, after which we show the $\Cmin$ bounds and finally we show the $\IC$ bounds. 
    \begin{itemize}
        \item Towards a $2n$ lower bound for $\DT(\GT_n)$, consider an adversary who uses the following strategy against a query algorithm for $\GT_n$ for the first $2n-1$ queries:
        \begin{itemize}
            \item If a variable $x_i$ is queried for the first time out of the pair $\cbra{x_i, y_i}$, answer 1.
            \item If a variable $y_i$ is queried for the first time out of the pair $\cbra{x_i, y_i}$, answer 0.
            \item If a variable $x_i$ ($y_i$) is queried such that $y_i$ ($x_i$) has already been queried, then answer such that $x_i = y_i$.
        \end{itemize}
        Just before the last query of the algorithm, we know that the strings are equal except for one bit. Say $x_i$ is the last unqueried bit. By the definition of the adversary, we know that $y_i = 0$. Thus, we have $\GT_n(x, y) = 0$ if $x_i = 0$ and $\GT_n(x, y) = 1$ if $x_i = 1$. That is, the function value depends on the last unqueried input bit. A similar argument works if the last unqueried bit is $y_i$. Thus, $\DT(\GT_n) \geq 2n$ and hence $\DT(\GT_n) = 2n$.
        \item The Odd-Max-Bit function outputs the parity of the largest index with variable value 1 if it exists, and outputs 0 otherwise. One may find such an index by scanning the input from right to left. This intuition is captured in the unique polynomial representation of $\OMB_n$ as $\OMB_n(x) = $
        \begin{align}\label{eqn:OMBexpansion}
        x_n\cdot 0 + (1-x_n)x_{n-1} \cdot 1 + (1-x_n)(1-x_{n-1})\OMB_{n-2}(x_1, \dots, x_{n-2}) & \quad \text{if $n$ is even, or}\\
        x_n\cdot 1 + (1-x_n)x_{n-1} \cdot 0 + (1-x_n)(1-x_{n-1})\OMB_{n-2}(x_1, \dots, x_{n-2}) & \quad \text{if $n$ is odd},
        \end{align}
        with $\OMB_1(x_1) = x_1$ and $\OMB_2(x_1, x_2) = 0 \cdot x_2 + (1-x_2)x_1 = x_1 - x_1x_2$.
        Note that in either case, the coefficient of the maximum-degree monomial equals $(-1)^{n + 1} \neq 0$, and hence the degree is $n$.
        By Lemma~\ref{lem:degdt}, this implies $\DT(\OMB_n) \geq n$, and hence $\DT(\OMB_n) = n$.
        \item The inequalities $\Cmin(\GT_n) \leq 2$ and $\Cmin(\OMB_n) \leq 1$ are witnessed by the strings $0^{n-1}10^n$ and $0^{n-1}1$, respectively. It is easy to see that no inputs $z \in \zone^{2n}, w \in \zone^n$ satisfy $\C(\GT_n, z) = 1$ and $\C(\OMB_n, w) = 0$, yielding $\Cmin(\GT_n) = 2$ and $\Cmin(\OMB_n) = 1$.
        \item For the instance complexity of Greater-Than, consider the natural query algorithm $\cA$ that first queries the most significant bits of $x, y$, outputs the answer if we can deduce it at this point, and use an algorithm for the Greater-Than instance without these two bits if we cannot deduce the output at this point, i.e., the bits seen are equal (see Figure~\ref{fig:gtdecisiontree} for a visual description of the query algorithm $\cA$ used).

    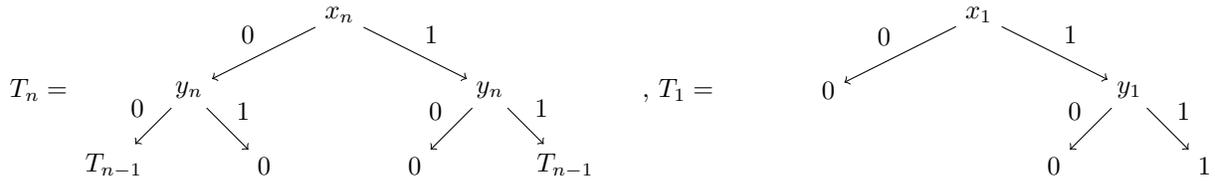
\begin{figure}[h!]
	\centering
	\begin{tabular}{ccc}
		\begin{tikzpicture}[yscale=-1, xscale=.7]
  \node (tn) at (-4,1) {$T_n = $};
			\node (root) at (0,0) {$x_n$};
			\node (1) at (2,1) {$y_n$};
			\node (11) at (3,2) {$T_{n-1}$};
			\node (0) at (-2,1) {$y_n$};
			\node (10) at (1,2) {$1$};
			\node (00) at (-3,2) {$T_{n-1}$};
			\node (01) at (-1,2) {$0$};
			\draw[->] (root) to node[above right] {$1$} (1);
			\draw[->] (0) to node[above right] {$1$} (01);
			\draw[->] (root) to node[above left] {$0$} (0);
			\draw[->] (1) to node[above left] {$0$} (10);
			\draw[->] (0) to node[above left] {$0$} (00);
			\draw[->] (1) to node[above right] {$1$} (11);

   \node (t1) at (4.5, 1) {, $T_1 = $};
			\node (sroot) at (8.5,0) {$x_1$};
			\node (s1) at (10.5,1) {$y_1$};
			\node (s11) at (11.5,2) {0};
			\node (s0) at (6.5,1) {$0$};
			\node (s10) at (9.5,2) {1};
			\draw[->] (sroot) to node[above right] {$1$} (s1);
			\draw[->] (sroot) to node[above left] {$0$} (s0);
			\draw[->] (s1) to node[above left] {$0$} (s10);
			\draw[->] (s1) to node[above right] {$1$} (s11);
\end{tikzpicture}
  \end{tabular}
	\caption{A decision tree $T_n$ for Greater-Than on $2n$ input bits}
	\label{fig:gtdecisiontree}
\end{figure}
We now argue that this algorithm $\cA$ witnesses $\IC(\GT_n) \leq 2$. We analyze the instance complexity of each input with respect to $\cA$.
\begin{itemize}
    \item Consider an input of the form $(x, y)$ with $x = y$. By definition, this is a 0-input for $\GT_n$. In order to certify that $x \ngtr y$, it suffices to query all of the 0-variables of $x$ and all of the 1-variables of $y$. It is also necessary that a certificate queries at least one element of each pair $(x_i, y_i)$, since otherwise the function value could be set to 1 by setting $x_i = 1$ and $y_i = 0$, a contradiction to the assumption that we started with a certificate. Thus, $\C(\GT_n, (x,y)) = n$ for all $x = y$. The number of input bits read by our query algorithm can easily be seen to be in $\cbra{2n-1, 2n}$. Thus, $\IC(\GT_n, (x, y), \cA) \leq \frac{2n}{n} = 2$ for all inputs with $x = y$.
    \item For an input $(x, y)$ with $x \neq y$, let $n - j$ denote the largest index with $x_{n-j} \neq y_{n-j}$. Here, $j \in \cbra{0, 1, \dots, n-1}$. Just as the argument in the previous bullet, it is easy to show that a certificate for $(x,y)$ must query at least one variable from each pair $(x_i, y_i)$ with $i \geq n-j$.\footnote{Moreover, there exists a certificate that makes one extra query: if $(x,y)$ is a 1-input, query all of the $x_i$ with $i > n - j$ and $x_i = 1$, and query all of the $y_i$ with $i > n - j$ and $y_i = 0$. Finally, query the pair $(x_{n-j}, y_{n-j})$.} Thus, $\C(\GT_n, (x, y)) \geq j+1$ for all such inputs. The number of queries made by our algorithm $\cA$ can be seen to be in $\cbra{2j + 1, 2j + 2}$: it queries all pairs $(x_i, y_i)$ (except when $n - j = 1$ and $x_1 = 0$). This implies $\IC(\GT_n, (x, y), \cA) \leq \frac{2j+2}{j+1} = 2$ for all inputs $(x, y)$ with $n-j$ the largest index where $x_{n-j} \neq y_{n-j}$.
\end{itemize}
\item As in the argument for the instance complexity of $\GT_n$, consider the natural query algorithm $\cB$ for Odd-Max-Bit that queries the variables from right to left and outputs the parity of the first index seen where the input takes value 1 (see Figure~\ref{fig:ombdecisiontreeeven} for a visual description of $\cB$ when $n$ is even and Figure~\ref{fig:ombdecisiontreeodd} for when $n$ is odd).
    \begin{figure}[h!]
	\centering
	\begin{tabular}{ccc}
		\begin{tikzpicture}[yscale=-1]
  \node (tn) at (-3,1) {$T_n = $};
			\node (root) at (0,0) {$x_n$};
			\node (1) at (1,1) {$0$};
			\node (0) at (-1,1) {$x_{n-1}$};
			\node (00) at (-2,2) {$T_{n-2}$};
			\node (01) at (0,2) {$1$};
			\draw[->] (root) to node[above right] {$1$} (1);
			\draw[->] (0) to node[above right] {$1$} (01);
			\draw[->] (root) to node[above left] {$0$} (0);
			\draw[->] (0) to node[above left] {$0$} (00);

   \node (t1) at (2.5, 1) {, $T_2 = $};
			\node (sroot) at (5,0) {$x_2$};
			\node (s1) at (6,1) {$0$};
			\node (s0) at (4,1) {$x_1$};
            \node (s00) at (3,2) {0};
            \node (s01) at (5,2) {1};
			\draw[->] (s0) to node[above right] {$1$} (s01);
			\draw[->] (s0) to node[above left] {$0$} (s00);
			\draw[->] (sroot) to node[above right] {$1$} (s1);
			\draw[->] (sroot) to node[above left] {$0$} (s0);
\end{tikzpicture}
  \end{tabular}
	\caption{A decision tree $T_n$ for Odd-Max-Bit on $n$ input bits with $n$ even}
	\label{fig:ombdecisiontreeeven}
\end{figure}
    \begin{figure}[h!]
	\centering
	\begin{tabular}{ccc}
		\begin{tikzpicture}[yscale=-1]
  \node (tn) at (-3,1) {$T_n = $};
			\node (root) at (0,0) {$x_n$};
			\node (1) at (1,1) {$1$};
			\node (0) at (-1,1) {$x_{n-1}$};
			\node (00) at (-2,2) {$T_{n-2}$};
			\node (01) at (0,2) {$0$};
			\draw[->] (root) to node[above right] {$1$} (1);
			\draw[->] (0) to node[above right] {$1$} (01);
			\draw[->] (root) to node[above left] {$0$} (0);
			\draw[->] (0) to node[above left] {$0$} (00);

   \node (t1) at (2.5, 1) {, $T_1 = $};
			\node (sroot) at (5,0) {$x_1$};
			\node (s1) at (6,1) {$1$};
			\node (s0) at (4,1) {$0$};
			\draw[->] (sroot) to node[above right] {$1$} (s1);
			\draw[->] (sroot) to node[above left] {$0$} (s0);
\end{tikzpicture}
  \end{tabular}
	\caption{A decision tree $T_n$ for Odd-Max-Bit on $n$ input bits with $n$ odd}
	\label{fig:ombdecisiontreeodd}
\end{figure}
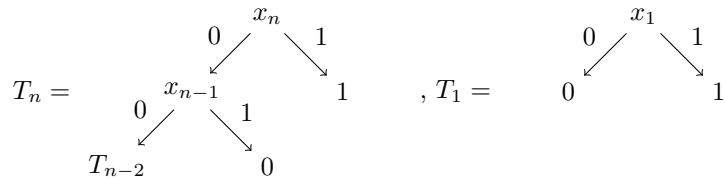
We now argue that this algorithm $\cB$ witnesses $\IC(\OMB_n) \leq 2$. We analyze the instance complexity of each input with respect to $\cB$.
\begin{itemize}
    \item Let $x$ be a 0-input to $\OMB_n$. First, consider the input $x = 0^n$. Any certificate for this input must query all variables $x_i$ with $i$ odd, since if unqueried, we could set $x_i = 1$, forcing $\OMB_n = 1$. Moreover, the set of all variables $x_i$ with $i$ odd forms a certificate for $0^n$. Thus, $\C(\OMB_n, 0^n) = \lfloor(n + 1)/2\rfloor$. The algorithm $\cB$ queries all variables on this input, and hence $\IC(\OMB_n, 0^n, \cB) = \frac{n}{\lfloor(n + 1)/2\rfloor} \leq 2$. Next, consider a 0-input $x \neq 0^n$ with $n-i$ the maximum index satisfying $x_{n-i} = 1$. Since $x$ is a 0-input, $n-i$ is even. A certificate for $x$ must query all variables $x_j$ with $j > n-i$ and $j$ odd, since otherwise setting $x_j = 1$ forces the output of $\OMB_n$ to 1. It must also query at least one more variable. Moreover there exists such a certificate, where the extra variable queried is $x_{n-i}$. Thus, $\C(\OMB_n, x) = \lfloor(i+1)/2\rfloor + 1$. The algorithm $\cB$ queries $i+1$ variables on this input, and hence $\IC(\OMB_n, x, \cB) = \frac{i+1}{\lfloor(i+1)/2\rfloor + 1} < 2$.
    \item Consider a 1-input $x$ with $n-i$ the maximum index satisfying $x_{n-i} = 1$. Since $x$ is a 1-input, $n-i$ is odd. A certificate for $x$ must query all variables $x_j$ with $j > n-i$ and $j$ even, since otherwise setting $x_j = 1$ forces the output of $\OMB_n$ to 0. It must also query at least one more variable. Moreover there exists such a certificate, where the extra variable queried is $x_{n-i}$. Thus, $\C(\OMB_n, x) = \lfloor(i+1)/2\rfloor +1$. The algorithm $\cB$ queries $i+1$ variables on this input, and hence $\IC(\OMB_n, x, \cB) = \frac{i+1}{\lfloor(i+1)/2\rfloor + 1} < 2$.
\end{itemize}
\end{itemize}
\end{proof}

\bibliographystyle{abbrvnat}
\bibliography{bibo}

\end{document}